\renewcommand{\algorithmicrequire}{{\textbf{Input:}}}
\renewcommand{\algorithmicensure}{{\textbf{Output:}}}
\title{On Newton-Raphson iteration for multiplicative inverses modulo prime powers}
\author{Jean-Guillaume Dumas\thanks{Universit\'e Grenoble Alpes;
    Laboratoire Jean Kuntzmann, (umr CNRS 5224);
    700 avenue centrale, IMAG - CS 40700, 
    F-38058 Grenoble, France.
\href{mailto:Jean-Guillaume.Dumas@univ-grenoble-alpes.fr}{Jean-Guillaume.Dumas@univ-grenoble-alpes.fr}
  }}
\newtheorem{theorem}{Theorem}
\newtheorem{lemma}{Lemma}
\newtheorem{remark}{Remark}
\newcommand{\Z}{{\mathbb Z}}
\newcommand{\N}{{\mathbb N}}
\newcommand{\kindex}{\ensuremath{m}}
\begin{document}

\maketitle
\renewcommand{\algorithmicrequire}{{\textbf{Input:}}}
\renewcommand{\algorithmicensure}{{\textbf{Output:}}}
\makeatletter
\newcommand{\IFTHEN}[3][default]{\ALC@it\algorithmicif\ #2\
  \algorithmicthen\ #3\
  \ifthenelse{\boolean{ALC@noend}}{}{\algorithmicendif\ } \ALC@com{#1}}
\makeatother
\newcommand{\bigO}[1]{\ensuremath{\mathcal{O}\left(#1\right)}\xspace}
\newcommand{\LOGAND}{~\&\,}
\newcommand{\LOGANDIN}{~\&=\,}
\newcommand{\LOGORIN}{~|=\,}
\newcommand{\MULIN}{~*=\,}
\newcommand{\MOD}{~\%\,}
\newcommand{\MODIN}{~\%=\,}
\newcommand{\SUBIN}{~-=\,}
\newcommand{\ADDIN}{~+=\,}

\begin{abstract} 
  We study algorithms for the fast computation of modular inverses.
  Newton-Raphson iteration over $p$-adic numbers gives a recurrence relation computing modular inverse modulo $p^m$, that is logarithmic in $m$. 
  We solve the recurrence to obtain an explicit formula for the inverse. 
  Then we study different implementation variants of this iteration and show that our explicit formula is interesting for small exponent values but slower for large exponent, say of more than $700$ bits.
  Overall we thus propose a hybrid combination of our explicit formula and the best asymptotic variants. This hybrid combination yields then a constant factor improvement, also for large exponents. 
\end{abstract}
\section{Introduction}
The multiplicative inverse modulo a prime power is fundamental for the
arithmetic of finite rings, for instance at the initialization phase of
Montgomery's integer multiplication (see, e.g.,
\cite{Dusse:1990:eurocrypt,Arazi:2008:CMI} and references therein). 
It is also used, e.g., to compute homology groups in algebraic topology for
image pattern recognition \cite{jgd:2003:GAP}, mainly to improve the running
time of algorithms working modulo prime powers. Those can be used for the
computation of the local Smith normal
form~\cite{jgd:2001:JSC,Elsheikh:2012:ISSAC}, for instance in the context of
algebraic topology: there linear algebra
modulo $p^e$ can reveal torsion coefficients and inverses are required for
pivoting in Gaussian elimination or minimal polynomial synthesis (see, e.g.,
\cite[algorithm LRE]{jgd:2003:GAP} or \cite{Reeds:1985:SRS}).

Classical algorithms to compute a modular inverse uses the
extended Euclidean algorithm and Newton-Raphson iteration over p-adic fields, namely Hensel lifting \cite{Krishnamurthy:1983:padic}.
Arazi and Qi in \cite{Arazi:2008:CMI} lists also some variants adapted to the binary characteristic case that cut the result in lower and higher bits.

In the following, we give another proof of Arazi and Qi's logarithmic formula
using Hensel lifting. 
Then we derive an explicit formula for the inverse that generalizes to any prime
power. Finally, we study the respective performance of the different algorithms
both asymptotically and in practice and introduce a hybrid algorithm combining
the best approaches. 

\section{Hensel's lemma modulo \texorpdfstring{$p^{m}$}{p{\textasciicircum}m}}
For the sake of completeness, we first give here Hensel's lemma and
its proof from Newton-Raphson's iteration (see
e.g. \cite[Theorem 7.7.1]{Bach:1996:ANTEA} or \cite[\S 4.2]{Brent:2011:MCA} and
references therein). 
\begin{lemma}[Hensel]\label{lem:hensel} 
Let $p$ be a prime number, $\kindex \in \N$, $f \in \Z[X]$ and $r \in Z$
such that $f(r) = 0 \mod p^\kindex$. If $f'(r)\neq 0 \mod p^\kindex$ and $$t=-\frac{f(r)}{p^\kindex}f'(r)^{-1},$$ then
$s=r+tp^\kindex$ satisfies $f(s)=0 \mod p^{2\kindex}$.
\end{lemma}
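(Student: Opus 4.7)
The plan is to use the finite Taylor expansion of the polynomial $f$ around $r$, evaluated at the shifted point $s = r + tp^{\kindex}$, and then verify that the constant and linear terms cancel by the choice of $t$ while every higher-order term carries a factor of $p^{2\kindex}$.

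More precisely, I would first recall that for any polynomial $f \in \Z[X]$ the binomial expansion gives the identity
\[
 f(X+Y) \;=\; f(X) \;+\; f'(X)\,Y \;+\; Y^{2}\,g(X,Y)
\]
for some $g \in \Z[X,Y]$, since the coefficient of $Y^{j}$ in $f(X+Y)$ is $\sum_{i} a_{i}\binom{i}{j}X^{i-j} = f^{(j)}(X)/j!$, which has integer coefficients. Setting $X=r$ and $Y=tp^{\kindex}$ yields
\[
 f(s) \;=\; f(r) \;+\; f'(r)\,t\,p^{\kindex} \;+\; (tp^{\kindex})^{2}\,g(r,tp^{\kindex}).
\]

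Next I would observe that $t$ is a well-defined integer: the hypothesis $f(r) \equiv 0 \pmod{p^{\kindex}}$ means $f(r)/p^{\kindex} \in \Z$, and the hypothesis on $f'(r)$ ensures that $f'(r)$ is a unit modulo $p^{\kindex}$, so its inverse exists. Substituting the value $t = -\bigl(f(r)/p^{\kindex}\bigr)\,f'(r)^{-1}$ into the linear term gives
\[
 f(r) + f'(r)\,t\,p^{\kindex} \;=\; f(r) - f(r) \;\equiv\; 0 \pmod{p^{2\kindex}},
\]
where the equivalence uses that $f'(r)\,f'(r)^{-1} \equiv 1 \pmod{p^{\kindex}}$, so the difference $f(r)(f'(r)f'(r)^{-1}-1)$ is divisible by $p^{\kindex}\cdot p^{\kindex}=p^{2\kindex}$. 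Finally, the remaining term $(tp^{\kindex})^{2}\,g(r,tp^{\kindex})$ is manifestly divisible by $p^{2\kindex}$, so $f(s) \equiv 0 \pmod{p^{2\kindex}}$.

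The main obstacle, if any, is the slight subtlety in the cancellation step: because $f'(r)^{-1}$ is only an inverse modulo $p^{\kindex}$ (not an exact inverse in $\Z$), one must keep track of the congruence carefully and verify that the residual $f(r)\bigl(f'(r)\,f'(r)^{-1}-1\bigr)$ picks up the extra factor of $p^{\kindex}$ needed to reach modulus $p^{2\kindex}$. Everything else is a direct computation from the Taylor identity.
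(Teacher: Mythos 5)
Your proposal is correct and follows the same route as the paper's proof: a Taylor expansion of $f$ at $r$ in powers of $tp^{\kindex}$, with the choice of $t$ cancelling the constant and linear terms and the quadratic tail absorbing $p^{2\kindex}$. You simply make explicit two details the paper hides inside its ``$O(p^{2\kindex})$'' — the integrality of the higher Taylor coefficients $f^{(j)}/j!$ and the fact that $f'(r)^{-1}$ is only an inverse modulo $p^{\kindex}$, so the residual $f(r)\bigl(f'(r)f'(r)^{-1}-1\bigr)$ must be checked to be divisible by $p^{2\kindex}$ — both of which you handle correctly.
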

\begin{proof}
Taylor expansion gives that
$f(r+tp^\kindex)=f(r)+tp^\kindex f'(r)+O(p^{2\kindex})$. Thus if
$t=-\frac{f(r)}{p^\kindex}f'(r)^{-1} $, 
the above equation becomes $f(s)=0 \mod p^{2\kindex}$.
\end{proof}

\section{Inverse modulo \texorpdfstring{$2^m$}{2{\textasciicircum}m}}
Now, in the spirit of \cite{Xenophontos:2010:fixed}, we apply this lemma to the
inverse function 
\begin{equation}\label{eq:invfun}
F_a(x)=\frac{1}{ax}-1
\end{equation}

\subsection{Arazi and Qi's formula}
We denote by an under-script $_L$ (resp. $_H$) the lower (resp. higher)
part in binary format for an integer.
From Equation~(\ref{eq:invfun}) and Lemma~\ref{lem:hensel} modulo $2^i$,
if $r=a^{-1} \mod 2^i$, then we immediately get
$$t=-\frac{ \frac{1}{ax}-1 }{2^i}\left(-\frac{1}{ax^2}\right)^{-1}.$$
In other words 
$t=\frac{1-ar}{2^i}r \mod 2^i$. Now let $a=b+2^i a_H \mod 2^{2i}$ so that we
also have $r=b^{-1} \mod 2^i$ and hence $r b = 1 + 2^i \alpha$ with
$0\leq \alpha < 2^i$. Thus
$a r = b r + 2^i r a_H = 1 +2^i( \alpha + r a_H)$
which shows that 
\begin{equation}\label{eq:arazi}
t =-( \alpha + r a_H)r
     \equiv -\left( \left(r b\right)_H + \left(r a_H\right)_L \right)r\mod 2^i
\end{equation}
The latter is exactly \cite[Theorem~1]{Arazi:2008:CMI} and yields the
following Algorithm~\ref{alg:arazi}, where the lower and higher parts of
integers are obtained via masking and shifting. 
\begin{algorithm}[htbp]
\caption{Arazi\&Qi Quadratic Modular inverse modulo $2^\kindex$}
\label{alg:arazi}
\begin{algorithmic}[1]
\REQUIRE $a \in \Z$ odd and $\kindex \in N$.
\ENSURE $U \equiv a^{-1} \mod 2^\kindex$.
\STATE $U=1$;
\FOR{($i=1$; $i<\kindex$; $i<<=1$)}
\STATE $b = a \LOGAND (2^i-1)$;\hfill\COMMENT{$b = a \mod 2^i$}
\STATE $t_1 = U * b$; $t_1 >>= i$;\hfill\COMMENT{$(rb)_H$}
\STATE $c = (a >> i) \LOGAND (2^i-1)$;\hfill\COMMENT{$a_H$}
\STATE $t_2 = (U * c) \LOGAND (2^i-1)$;\hfill\COMMENT{$(ra_H)_L$}
\STATE $t_1 \ADDIN t_2$;
\STATE $t_1 \MULIN U$; $t_1 \LOGANDIN (2^i-1)$;\hfill\COMMENT{$-t$}
\STATE $t_1 = 2^i-t_1$;\hfill\COMMENT{$t$}
\STATE $t_1 <<= i$;\hfill\COMMENT{$t 2^i$}
\STATE $U \LOGORIN t_1$;\hfill\COMMENT{$r+t 2^i$}
\ENDFOR
\STATE $U \LOGANDIN (2^\kindex-1)$;\hfill\COMMENT{$r \mod 2^\kindex$}
\RETURN $U$;
\end{algorithmic}
\end{algorithm}

\begin{lemma}\label{lem:ara} Algorithm~\ref{alg:arazi} requires $13 \lfloor
  \log_2(m) \rfloor+1$ arithmetic operations.
\end{lemma}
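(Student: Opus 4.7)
The proof is a straightforward counting exercise, with no real mathematical content beyond enumeration. The plan is to tally the arithmetic operations executed per pass of the FOR loop, determine how many times the loop runs, and add operations executed outside the loop.

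First I would walk through lines 3--11 of Algorithm~\ref{alg:arazi} and count by type: three multiplications (lines 4, 6, 8), three shifts (lines 4, 5, 10), four bitwise ANDs (lines 3, 5, 6, 8), one addition (line 7), one subtraction (line 9), and one bitwise OR (line 11). Summing gives $3+3+4+1+1+1 = 13$ arithmetic operations per pass.

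Next I would bound the number of iterations. Since the counter $i$ is initialised to $1$ and doubled at every step until it reaches $m$, it takes the values $1, 2, 4, \ldots, 2^{k}$ with $2^{k} < m$, giving exactly $\lfloor \log_2 m \rfloor$ iterations when $m$ is a power of $2$. Adding the single AND on line 13, executed once after the loop, yields the claimed total of $13 \lfloor \log_2 m \rfloor + 1$ arithmetic operations.

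The only delicate point — and arguably the only obstacle in the argument — is the accounting convention. For the constant $13$ to be tight one must treat the loop-counter update $i<<=1$ and the generation of the mask constants $2^i - 1$ as free bookkeeping, charging instead only for each multiplication, shift, addition, subtraction, AND, and OR. A stricter convention (for example, charging separately for each recomputation of $2^i-1$) would alter the leading constant but not the $\bigO{\log m}$ asymptotic, so the result and its proof stand unchanged in spirit.
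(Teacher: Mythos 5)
Your count is correct: each of lines 3--11 contributes exactly the multiplications, shifts, ANDs, addition, subtraction and OR you list, totalling $13$ per pass, the loop doubles $i$ from $1$ until it reaches $\kindex$, and the final masking on line 13 supplies the $+1$. The paper states Lemma~\ref{lem:ara} without any proof, so there is nothing to compare against, but your direct enumeration is clearly the intended argument; your remarks that the stated iteration count $\lfloor\log_2(\kindex)\rfloor$ is exact only for $\kindex$ a power of two and that the constant $13$ depends on not charging for the loop update and mask generation are both accurate and worth making explicit.
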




\subsection{Recurrence formula}
Another view of Newton-Raphson's iteration is to create a recurrence. 
Equation~(\ref{eq:invfun}) gives 
\begin{equation}\label{eq:rec}
\begin{split}
U_{n+1} &= U_n - \frac{\frac{1}{aU_n}-1}{-\frac{1}{aU_n^2}} = U_n-(aU_n-1)U_n \\
&= U_n(2-a U_n)
\end{split}
\end{equation}

This yields the loop of Algorithm~\ref{alg:hensel}, for the
computation of the inverse, 
see e.g. \cite{Krishnamurthy:1983:padic} or \cite[\S 2.4]{Brent:2011:MCA}.

\begin{algorithm}[htbp]
\caption{Hensel Quadratic Modular inverse}
\label{alg:hensel}
\begin{algorithmic}[1]
\REQUIRE $p\in \Z$ a prime, $a \in \Z$ coprime to $p$,  and $\kindex \in N$.
\ENSURE $U \equiv a^{-1} \mod p^\kindex$.
\STATE $U=a^{-1} \mod p$;\hfill\COMMENT{extended gcd}
\FOR{($i=2$; $i<\kindex$; $i<<=1$)}
\STATE\label{lin:sqr} $temp = U * U$;\hfill\COMMENT{$U_n^2$} 
\STATE $temp \MULIN a$;\hfill\COMMENT{$aU_n^2$} 
\STATE\label{lin:mod1} $temp \MODIN p^{i}$;\hfill\COMMENT{$temp \mod
  p^{i}$}
\STATE $U <<= 1$;\hfill\COMMENT{$2U_n$}
\STATE $U \SUBIN temp$;\hfill\COMMENT{$U_n(2-aU_n)$}
\ENDFOR 
\STATE $temp = U * U$;\hfill\COMMENT{$U_n^2$} 
\STATE $temp \MULIN a$;\hfill\COMMENT{$aU_n^2$} 
\STATE\label{lin:mod2} $temp \MODIN p^\kindex$;\hfill\COMMENT{$temp \mod
  p^{\kindex}$}
\STATE $U <<= 1$;\hfill\COMMENT{$2U_n$}
\STATE $U \SUBIN temp$;\hfill\COMMENT{$U_n(2-aU_n)$}
\STATE $U \MODIN p^\kindex$;\hfill\COMMENT{$U \mod p^\kindex$}

\RETURN $U$;
\end{algorithmic}
\end{algorithm}

\begin{lemma}\label{lem:rec} Algorithm~\ref{alg:hensel} is correct and requires 
$6 \lceil \log_2(m) \rceil + 2$ arithmetic operations.
\end{lemma}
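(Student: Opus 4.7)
The plan is to split the claim into correctness and operation count, handling the former by induction on the loop counter and the latter by direct tallying. Correctness rests on the algebraic identity $a\,U_n(2-aU_n)=1-(1-aU_n)^2$, which is exactly the Newton--Raphson step for $F_a$ from Equation~(\ref{eq:invfun}) and hence a specialization of Lemma~\ref{lem:hensel}: if $aU_n\equiv 1\pmod{p^j}$, then $aU_{n+1}\equiv 1\pmod{p^{2j}}$. I would use as loop invariant that $aU\equiv 1\pmod{p^i}$ after the body has been executed with counter~$i$, with base case $i=1$ coming from $U=a^{-1}\bmod p$. For the inductive step, reducing the intermediate products modulo $p^i$ as the algorithm does is sufficient because $(1-aU)^2$ already vanishes modulo $p^i$ whenever $aU\equiv 1\pmod{p^{i/2}}$, so truncation does not lose information. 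Once the loop exits, $U$ has precision at least $\lceil m/2\rceil$, and the post-loop block performs one more Newton step directly modulo $p^m$, bringing the precision up to $m$.

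For the operation count, I would inspect the body line by line: two multiplications ($U\cdot U$ and the multiplication by $a$), two reductions modulo $p^i$, one shift realising $2U_n$, and one subtraction, six arithmetic operations per iteration in total. The loop runs while $i<m$ starting from $i=2$ and doubling, so it executes $\lceil\log_2(m)\rceil-1$ times. The post-loop block contains the same six operations plus one additional reduction modulo $p^m$, for seven; the initial extended gcd modulo $p$ is counted as a single primitive. The total is then $6(\lceil\log_2(m)\rceil-1)+7+1=6\lceil\log_2(m)\rceil+2$, matching the stated bound.

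The only delicate point is the bookkeeping when $m$ is not a power of two: the precision at the end of the loop is the largest power of two strictly less than $m$, but one further Newton doubling reaches a power of two at least $m$, and the final reduction modulo $p^m$ then yields the correct inverse. The convention of counting the initial inversion modulo $p$ as a single operation is harmless here since that step is independent of $m$ and its cost is dominated by the subsequent lifts as $m$ grows.
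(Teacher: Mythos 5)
Your proof is correct and follows essentially the same route as the paper's: the correctness part is the same precision-doubling induction, your identity $a\,U(2-aU)=1-(1-aU)^2$ being literally the paper's expansion $(1+\lambda p^k)(2-1-\lambda p^k)=1-\lambda^2p^{2k}$. You additionally carry out the line-by-line operation count ($6$ operations per iteration over $\lceil\log_2(m)\rceil-1$ iterations, plus $7$ for the post-loop step and $1$ for the initial extended gcd), which the paper asserts but does not detail; your tally is accurate.
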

\begin{proof} The proof of correctness is natural in view of the Hensel
lifting. First $U_0=a^{-1} \mod p$. Second, by induction, suppose 
$a \cdot U_n \equiv 1 \mod p^{k}$. Then $aU_n=1+\lambda p^k$ and $aU_{n+1} =
aU_n(2-aU_n)=(1+\lambda p^k)(2-1-\lambda p^k)=(1-\lambda^2p^{2k} \equiv 1 \mod
p^{2k})$. Finally $U_n \equiv a^{-1} \mod p^{2^n}$.
\end{proof}

\begin{remark} We present this algorithm for computations modulo $p^m$ but its
optimization modulo a power of $2$ is straightforward: replace the modular
operations of for instance lines~\ref{lin:mod1},~\ref{lin:mod2} etc. by a
binary masking: $x\LOGANDIN (2^i-1)$. 
\end{remark}

\begin{remark} It is important to use a squaring in line~\ref{lin:sqr}.
Indeed squaring can be faster than multiplication, in particular in the
arbitrary precision setting \cite{Zuras:1994:square}. 
In the case of Algorithm~\ref{alg:hensel}, the improvement over an algorithm of the
form $temp=2-a*U; temp \%= p^m; U*=temp;U\%=p^m;$ is of about $30\%$.
\end{remark}

\begin{remark}
Note that for Algorithms~\ref{alg:arazi} and~\ref{alg:hensel}, a large part of
the computation occur during the last iteration of the loop when $2^i$ is
closest to $2^m$. Therefore, a recursive version cutting in halves will be more
efficient in practice since the latter will be exactly done at $i=\kindex/2$
instead of at the largest power of $2$ lower than $\kindex$. Moreover this
improvement will take place at each recursion level. 
We thus give in the following the recursive version for Formula~(\ref{eq:rec}),
the one for a recursive version of Arazi\&Qi is in the same spirit. 

\begin{algorithm}[htbp]
\newcounter{prevalgorithm}
\setcounter{prevalgorithm}{\thealgorithm}
\renewcommand\thealgorithm{\theprevalgorithm'}
\addtocounter{algorithm}{-1}
\caption{Recursive Hensel}
\label{alg:rec}
\begin{algorithmic}[1]
\REQUIRE $p\in \Z$ a prime, $a \in \Z$ coprime to $p$, and $\kindex \in N$.
\ENSURE $r \equiv a^{-1} \mod p^\kindex$.
\IFTHEN{$\kindex == 1$}{\algorithmicreturn~$a^{-1} \mod p$;}\hfill\COMMENT{ext. gcd}
\STATE $h=\lceil\frac{\kindex}{2}\rceil$
\STATE\label{lin:modph} $b = a \MOD p^h$;\hfill\COMMENT{$b = a \mod p^h$}
\STATE $r=$RecursiveHensel$(p,b,h)$;
\STATE $temp = r * r$;\hfill\COMMENT{$r^2$} 
\STATE $temp \MULIN a$;\hfill\COMMENT{$a r^2$} 
\STATE\label{lin:modpm} $temp \MODIN p^{m}$;\hfill\COMMENT{$temp \mod p^{m}$}
\STATE $r <<= 1$;\hfill\COMMENT{$2r$}
\STATE $r \SUBIN temp$;\hfill\COMMENT{$r(2-ar)$}
\STATE $r \MODIN p^\kindex$;\hfill\COMMENT{$r \mod p^{\kindex}$}
\RETURN $r$;
\end{algorithmic}
\end{algorithm}
\end{remark}

\subsection{Factorized formula}
We now give an explicit formula for the inverse by solving the
preceding recurrence relation, first in even characteristic.\\

We denote by $H_n=aU_n$ a new sequence, that
satisfies $H_{n+1}=H_n(2-H_n)$.
With $H_0=a$ we get $H_1=a(2-a)=2a-a^2=1-(a-1)^{2^1}$,
by induction, supposing that $H_n=1-(a-1)^{2^i}$, we get
\begin{align*}
H_{n+1}&=\left(1-(a-1)^{2^n}\right)\left(2-1+(a-1)^{2^n}\right)\\
& = 1^2-\left((a-1)^{2^n}\right)^2 =1-(a-1)^{2^{n+1}}
\end{align*}
Using the remarkable identity, this in turn yields:
$$H_{n} = a(2-a)\prod_{i=1}^{n-1}\left(1+(a-1)^{2^i}\right);$$ therefore, with
$U_0=1$ and $U_1=2-a$ we have that  
\begin{equation}\label{eq:expl}
U_{n}= (2-a)\prod_{i=1}^{n-1}\left(1+(a-1)^{2^i}\right)
\end{equation}

The latter equation gives immediately rise to the following Algorithm~\ref{alg:expl}. 

\begin{algorithm}[htbp]
\caption{Explicit Quadratic Modular inverse modulo $2^\kindex$}
\label{alg:expl}
\begin{algorithmic}[1]
\REQUIRE $a \in \Z$ odd and $\kindex \in N$.
\ENSURE $U \equiv a^{-1} \mod 2^\kindex$.
\STATE Let $s$ and $t$ be such that $t$ is odd and $a=2^st+1$;
\STATE $U=2-a$;
\STATE $amone = a-1$;
\FOR{($i=1$; $i<\frac{\kindex}{s}$; $i<<=1$)}
\STATE $amone \MULIN amone$; \hfill\COMMENT{square: $(a-1)^{2^i}$}
\STATE $amone \LOGANDIN (2^\kindex-1)$;\hfill\COMMENT{$(a-1)^{2^i} \mod 2^\kindex$}
\STATE $U \MULIN (amone+1)$; 
\STATE $U \LOGANDIN (2^\kindex-1)$;\hfill\COMMENT{$U \mod 2^\kindex$}
\ENDFOR
\RETURN $U$;
\end{algorithmic}
\end{algorithm}

\begin{lemma} Algorithm~\ref{alg:expl} is correct and 
requires $5 \lfloor\log_2(\frac{m}{s})\rfloor + 2$ arithmetic operations.
\end{lemma}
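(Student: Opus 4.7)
My approach is to split the argument into a correctness claim and an operation-count claim, both resting on the explicit form~(\ref{eq:expl}) together with the identity $aU_n = 1-(a-1)^{2^n}$ already established when deriving it.

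For correctness, I would first verify by induction on the loop iteration that, after $k$ passes through the body, the variable $U$ equals the partial product $(2-a)\prod_{j=1}^{k}\bigl(1+(a-1)^{2^j}\bigr)\bmod 2^m$, i.e.\ $U \equiv U_{k+1}\pmod{2^m}$ in the notation of~(\ref{eq:expl}). The base case is provided by the initialisations $U \leftarrow 2-a$ and $amone \leftarrow a-1=(a-1)^{2^0}$; during the $k$-th pass the squaring turns $amone$ from $(a-1)^{2^{k-1}}$ into $(a-1)^{2^{k}}$, after which $U$ is multiplied by $amone+1$. The intermediate reductions modulo $2^m$ are clearly valid since we only care about $U$ modulo $2^m$. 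Once this invariant is in hand, I combine $aU_n=1-(a-1)^{2^n}$ with the hypothesis $a-1=2^st$, which gives $v_2\bigl((a-1)^{2^n}\bigr)=s\cdot 2^n$, so that $aU_n\equiv 1\pmod{2^{s\cdot 2^n}}$. It then suffices to show that at termination the index $n=k+1$ satisfies $s\cdot 2^n\ge m$: the loop exits exactly when $i=2^k\ge m/s$, so $s\cdot 2^{k+1}\ge 2m>m$, and the output is indeed $a^{-1}\bmod 2^m$.

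For the operation count, I would tally the work directly: two initialisations (lines~2--3), plus, in each pass of the body, one squaring, one mask, one increment $amone+1$, one multiplication, and one final mask, for five operations per iteration. Counting iterations is then the standard analysis of a loop that doubles its index from $1$ under the test $i<m/s$, which yields $\log_2(m/s)$ iterations up to rounding; combined with the two setup operations this gives the announced $5\lfloor\log_2(m/s)\rfloor+2$ bound.

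The main obstacle I anticipate is keeping the two indexings straight: the loop variable $i$ in the algorithm is the \emph{previous} doubled exponent (i.e.\ at the beginning of the $k$-th pass $i=2^{k-1}$, while the exponent just produced in $amone$ is $2^k$), so both the invariant and the termination condition must be phrased carefully to line up with the index $n$ in~(\ref{eq:expl}). Once this bookkeeping is nailed down, the rest is a routine merging of the $2$-adic valuation argument with the iteration count.
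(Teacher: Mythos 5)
Your proof is correct and follows essentially the same route as the paper's: the key step in both is writing $aU_n=1-(a-1)^{2^n}=1-(2^st)^{2^n}\equiv 1 \bmod 2^{s2^n}$ and concluding that roughly $\log_2(m/s)$ doublings suffice. You are in fact more complete than the paper, which omits both the loop-invariant verification and the explicit $5$-operations-per-iteration tally (and, like you, glosses over the floor-versus-ceiling discrepancy in the iteration count).
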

\begin{proof}
Modulo $2^m$, $a$ is invertible if and only if $a$ is odd, so that
$a=2^st+1$ and therefore, using Formula~(\ref{eq:expl}), we get
$a U_n = H_n = 1-(a-1)^{2^n}=1-(2^st)^{2^n} \equiv 1 \mod 2^{s2^n}.$
Thus,
$U_{\lceil\log_2(\frac{m}{s})\rceil} \mod 2^m \equiv a^{-1} \mod 2^m.$
\end{proof}

There are two major points to remark with this variant:
\begin{enumerate}
\item It performs fewer operations than previous algorithms.
\item It must compute with the full $p$-adic development (modulo operations are
  made modulo $2^m$ and not $2^i$.
\end{enumerate}
Therefore we will see that this algorithm has a worse asymptotic complexity
but is very efficient in practice for small exponents.

\subsection{Generalization modulo any prime power}
The formula generalizes directly for any prime power:
\begin{theorem}
Let $p$ be a prime number, $a$ coprime to $p$ and $b=a^{-1} \mod p$ is the
inverse of $a$ modulo $p$.
Let also $V_n$ be the following sequence:
\begin{equation}
\begin{cases}
V_0 &= b \equiv a^{-1} \mod p,\\
V_n &= b(2-ab)\prod_{i=1}^{n-1}\left(1+(ab-1)^{2^i}\right)
\end{cases}
\end{equation}
Then $V_n \equiv a^{-1} \mod p^{2^n}$.
\end{theorem}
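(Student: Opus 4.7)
The plan is to mimic the derivation of Equation (\ref{eq:expl}), but starting the Newton--Raphson iteration from the honest seed $V_0 = b$ (the modular inverse mod $p$) rather than from $U_0 = 1$. The role formerly played by ``$a$ being odd'' (i.e.\ $a \equiv 1 \bmod 2$) is taken over by the fact that $ab \equiv 1 \bmod p$, so that $ab-1$ is divisible by $p$. That single observation is what converts convergence mod $2^{2^n}$ into convergence mod $p^{2^n}$.

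First I would introduce the auxiliary sequence $W_n = a V_n$ and check that $V_n$ is nothing but the iterate of (\ref{eq:rec}) starting from $V_0 = b$. Equivalently, I would verify directly from the product definition that $V_{n+1} = V_n(2-aV_n)$, hence $W_{n+1} = W_n(2-W_n)$. Then, exactly as in the derivation preceding Equation (\ref{eq:expl}), an induction on $n$ together with the difference-of-squares identity gives
\begin{equation*}
W_n \;=\; 1 - (ab-1)^{2^n},
\end{equation*}
and the product telescoping $W_n = ab(2-ab)\prod_{i=1}^{n-1}\bigl(1+(ab-1)^{2^i}\bigr)$ matches $aV_n$ by construction.

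Next I would exploit the hypothesis on $b$. Since $b \equiv a^{-1} \bmod p$, we can write $ab - 1 = p\lambda$ for some $\lambda \in \Z$, and therefore $(ab-1)^{2^n} \equiv 0 \bmod p^{2^n}$. Plugging this into the closed form for $W_n$ yields $aV_n = W_n \equiv 1 \bmod p^{2^n}$, which is exactly the claim $V_n \equiv a^{-1} \bmod p^{2^n}$.

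There is no real obstacle here: the only subtlety is that when $p=2$ the base case is $b=1$ and $V_0 = 1$ rather than $V_1 = 2-a$, so one should check that the indexing of the product (starting at $i=1$) remains consistent with the even-characteristic formula (\ref{eq:expl}); this is just a matter of verifying that the factor $b(2-ab)$ reduces to $2-a$ when $b=1$. Once the induction step $W_{n+1} = W_n(2-W_n) = 1-(ab-1)^{2^{n+1}}$ is written out, the rest is immediate.
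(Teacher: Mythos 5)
Your proposal is correct and follows essentially the same route as the paper's own proof: both establish the closed form $aV_n = 1-(ab-1)^{2^n}$ (you just make the induction and telescoping explicit where the paper cites the analogue of Equation~(\ref{eq:expl})) and then conclude from $ab-1=\lambda p$ that $(ab-1)^{2^n}\equiv 0 \bmod p^{2^n}$. No substantive differences.
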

\begin{proof}
The proof is similar to that of Lemma~\ref{alg:expl} and follows also from
Hensel's lemma.
From the analogue of Equation~(\ref{eq:expl}), we have $a\cdot
V_n=1-(ab-1)^{2^n}$. 
Now as $a\cdot b = 1 + \lambda p$, by the definition of $b$ we have 
$a\cdot V_n=1-(ab-1)^{2^n} = 1 - (\lambda p)^{2^n} \equiv 1 \mod p^{2^n}$. 
\end{proof}

\section{Complexity analysis over arbitrary precision}\label{sec:ana}
We provide here the equivalents of the complexity results of the previous
section but now for arbitrary precision: the associated binary
complexity bounds for the different algorithms are given here with classical
arithmetic operations on integer (i.e. without fast variants like Karatsuba or
DFT). 
We thus now suppose that masking and shifting as well as addition are linear
and that multiplication is quadratic (\bigO{2m^2} operations to multiply to
elements of size $m$).
\begin{lemma}\label{lem:arazi} 
Using classical arithmetic, Algorithm~\ref{alg:arazi} requires
$$\bigO{2\kindex^2+10\kindex} ~~\text{binary operations.}$$
\end{lemma}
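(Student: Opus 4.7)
The plan is to bound the bit-cost of each iteration of Algorithm~\ref{alg:arazi} separately and then sum via geometric series. By Lemma~\ref{lem:ara} the loop variable takes the values $i=1,2,4,\ldots,2^{k-1}$ for $k=\lfloor\log_2(m)\rfloor$, so there are $k$ passes through the body.

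First I would check that at the iteration with $i=2^j$, every operand manipulated on lines~3--11 occupies $\mathcal{O}(i)$ bits. The mask $2^i-1$, the pieces $b$ and $c$ extracted from $a$, the current lift $U$ (which holds $a^{-1}\bmod 2^i$ at the start of the pass by the correctness proof of the Hensel lifting), and the intermediate values $t_1,t_2$ are all bounded by $2^{2i}$; in particular, the three products $U\cdot b$, $U\cdot c$ and $t_1\cdot U$ momentarily reach $2i$ bits before being immediately shifted or masked back down to $i$ bits.

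Next I would tally per-iteration bit-operations. Using classical arithmetic, each of the three $i$-bit by $i$-bit multiplications on lines~4, 6 and~8 costs at most $2i^{2}$ bit operations (standard shift-and-add with an $i$-bit multiplier and a $2i$-bit accumulator), totalling $6i^{2}$ per pass. The remaining instructions on lines~3, 5, 7, 9, 10 and~11 are shifts, masks, additions, subtractions, and bitwise OR's on $\mathcal{O}(i)$-bit integers; there are at most the $10$ such linear-time primitives counted in Lemma~\ref{lem:ara}, contributing $\mathcal{O}(10\,i)$. The final mask on line~13 lives outside the loop and costs $\mathcal{O}(m)$.

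The two resulting geometric series sum to
\[
\sum_{j=0}^{k-1}6\cdot 4^{j} \;=\; 2\bigl(4^{k}-1\bigr) \;\leq\; 2m^{2},
\qquad
\sum_{j=0}^{k-1}10\cdot 2^{j} \;=\; 10\bigl(2^{k}-1\bigr) \;\leq\; 10\,m,
\]
which together with the post-loop $\mathcal{O}(m)$ mask yield the announced $\mathcal{O}(2m^{2}+10m)$ bit operations. The main obstacle I expect is keeping the constants tight enough to see the announced factors $2$ and $10$: this requires justifying that each loop multiplication is genuinely of two $i$-bit operands (and not $2i$-bit ones) --- which rests on the invariant on $U$ described above --- and checking that the dozen or so auxiliary primitives per pass really do act on $i$-bit rather than $m$-bit quantities, so that the linear contributions telescope to $\mathcal{O}(m)$ rather than $\mathcal{O}(m\log m)$.
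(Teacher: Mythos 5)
Your proposal is correct and follows essentially the same route as the paper: count the three quadratic multiplications per pass as $3\cdot 2(2^j)^2$ and the remaining linear-time primitives per pass as $\bigO{2^j}$ each, then sum the two geometric series to get $2\kindex^2$ and $10\kindex$ respectively. If anything, your accounting of the ten linear operations per iteration is slightly more careful than the paper's, which writes the linear term as $1\cdot(2^j)$ in its sum yet still announces the $10\kindex$ bound.
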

\begin{proof} Following the algorithm, the complexity bound becomes 
$$\bigO{\kindex+\sum_{j=1}^{\log_2(\kindex)-1} 3 \cdot 2(2^j)^2 + 1 (2^j) }=\bigO{2\kindex^2+10\kindex}.$$
\end{proof}
\smallskip
\begin{lemma}\label{lem:henarb} 
Using classical arithmetic in even characteristic modulo $2^m$, Algorithm~\ref{alg:hensel} requires $$\bigO{\frac{16}{3}\kindex^2+9\kindex} ~~\text{binary operations.}$$
\end{lemma}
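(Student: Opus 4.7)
The plan is to mirror the analysis of Lemma \ref{lem:arazi}, adopting the same cost model: a multiplication of two $i$-bit integers costs $2i^2$ binary operations, squaring at size $i$ costs at most the same, and shifts, masks, additions, and subtractions at size $i$ are linear in $i$. Because we are in even characteristic, every modular reduction $\MODIN 2^i$ collapses to a bitwise AND with $2^i-1$ and is itself linear, as noted in the remark following Algorithm \ref{alg:hensel}.

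First I would price one pass through the main loop at size $i = 2^j$. The body performs exactly two size-$i$ multiplications, namely the squaring $U*U$ and the product $\MULIN a$, each contributing $2i^2$ for a dominant $4i^2$; it additionally performs two masked reductions, one left shift, and one subtraction, contributing $O(i)$ (roughly $4i$). Summing this over the doubling indices $j = 1, \ldots, \lceil\log_2 m\rceil - 1$ uses the geometric identities $\sum_{j=1}^{\log_2(m)-1} 4^j = (m^2 - 4)/3$ and $\sum_{j=1}^{\log_2(m)-1} 2^j = m-2$, producing a loop contribution of $\tfrac{4m^2 - 16}{3} + (4m - 8)$.

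Next I would add the post-loop finishing block (lines 10--16), which executes the same multiplication pattern once more but at the full size $m$: two multiplications contributing $4m^2$, together with three masks, one shift, and one subtraction contributing $5m$. Combining the two pieces gives
\[
\frac{4m^2 - 16}{3} + 4m^2 + (4m - 8) + 5m \;=\; \frac{16}{3}m^2 + 9m + O(1),
\]
which matches the announced $\bigO{\tfrac{16}{3}m^2 + 9m}$. The main step that is not routine is the separation of the distinguished finishing block from the loop body, since the condition $i < m$ in the loop forces the final doubling to be handled by lines 10--16; once this is done, the coefficient $\tfrac{16}{3} = \tfrac{4}{3} + 4$ decomposes transparently into the geometric-sum contribution and the full-size finishing step, and the case where $m$ is not a power of two is absorbed into the $O(m)$ term.
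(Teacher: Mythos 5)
Your proposal is correct and follows essentially the same route as the paper: the same cost model (quadratic multiplication at cost $2i^2$, linear masks/shifts/additions), the same decomposition into the geometric sum over the doubling loop plus a distinguished full-size finishing block, and the same arithmetic yielding $\tfrac{4}{3}\kindex^2 + 4\kindex^2 = \tfrac{16}{3}\kindex^2$ and $4\kindex + 5\kindex = 9\kindex$. The only (immaterial) difference is that you start the sum at $j=1$ where the paper starts at $j=2$, which affects only the additive constant absorbed by the $\bigO{\cdot}$.
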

\begin{proof} Following the algorithm, the complexity bound becomes
  $$\bigO{2\cdot 2\kindex^2+5\kindex+\sum_{j=2}^{\log_2(\kindex)-1} 2 \cdot 2(2^j)^2 + 4(2^j) }=\bigO{ \frac{16}{3}\kindex^2+9\kindex}.$$
\end{proof}
\smallskip
\begin{lemma}\label{lem:explarb} 
Using classical arithmetic, Algorithm~\ref{alg:expl} requires
$$\bigO{\left(4\kindex^2+2\kindex\right)\lfloor\log_2(\kindex)\rfloor} ~~\text{binary operations.}$$
\end{lemma}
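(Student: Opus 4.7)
The plan is to mirror the proofs of Lemmas \ref{lem:arazi} and \ref{lem:henarb}, but to exploit the key structural difference highlighted in the second remark after Algorithm \ref{alg:expl}: every modular reduction in Algorithm \ref{alg:expl} is performed modulo $2^\kindex$, so every operand stays of size $\Theta(\kindex)$ throughout the iteration. Consequently, unlike in the Hensel or Arazi--Qi analyses, the per-iteration cost does not shrink as $i$ grows, and the total will not telescope into a pure $\kindex^2$ expression.

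First I would account for the work inside a single pass of the \textbf{for} loop. Using the same conventions as in Lemmas \ref{lem:arazi} and \ref{lem:henarb} (a classical multiplication of two $\kindex$-bit integers costs $2\kindex^2$ binary operations, while masking, shifting and addition are linear in $\kindex$), the squaring $amone \MULIN amone$ contributes $2\kindex^2$, the multiplication $U \MULIN (amone+1)$ contributes another $2\kindex^2$, and the two maskings against $2^\kindex-1$ contribute $2\kindex$. The addition $amone+1$ is absorbed in the linear term. This gives a per-iteration cost of $4\kindex^2 + 2\kindex$ binary operations.

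Next I would bound the number of iterations. Since $i$ starts at $1$ and is doubled each time until it reaches $\kindex/s$, the loop executes $\lfloor \log_2(\kindex/s) \rfloor$ times. Because $a$ is odd one has $s \geq 1$, hence the worst case $s=1$ yields an upper bound of $\lfloor \log_2(\kindex) \rfloor$ iterations. Multiplying the per-iteration cost by the number of iterations produces the announced $\bigO{(4\kindex^2 + 2\kindex) \lfloor \log_2(\kindex) \rfloor}$ bound.

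There is no real obstacle — the only subtle point worth underlining is why the bound does not collapse into a pure $\kindex^2$ expression as for the other two algorithms: since each multiplication is performed on full-size operands, the logarithmic factor coming from the number of iterations is not absorbed by a geometric telescoping sum $\sum_j 2^{2j}$ and must remain as a multiplicative $\lfloor \log_2(\kindex) \rfloor$ in the final estimate. This is precisely the asymptotic disadvantage of the explicit formula anticipated in the discussion preceding the lemma.
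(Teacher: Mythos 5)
Your proposal is correct and follows essentially the same route as the paper's (much terser) proof: both charge each iteration $2\cdot 2\kindex^2 + 2\kindex$ binary operations for the two full-size multiplications plus the linear maskings, and multiply by the $\bigO{\log_2\kindex}$ iterations of the loop. Your explicit handling of the parameter $s$ and your remark on why the sum does not telescope are welcome elaborations, but they do not change the argument.
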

\begin{proof} Similarly, here we have 
$$\bigO{\sum_{j=1}^{\log_2(\kindex)-1} 2 \cdot 2\kindex^2 + 2\kindex}=\bigO{\left(4\kindex^2+2\kindex\right)\log_2(m)}.$$\end{proof}

\section{Experimental comparisons}
The point of the classical Newton-Raphson algorithms (as well as Arazi and Qi's
variant) is that it works with modular computations of increasing sizes, whereas
the explicit formula requires to work modulo the highest size from the
beginning. 
On the one hand we show next that this gives an asymptotic advantage the recurring relations. On the other hand, in practice, the explicit formula enables
much faster performance for say cryptographic sizes.
All experiments have been done on an Intel Xeon W3530, 2.8 GHz, running linux
debian\footnote{The source code for the experiments is available on
  \url{http://ljk.imag.fr/membres/Jean-Guillaume.Dumas/Software/InvModTwoK}. It
  uses GMP version 5.0.5 (\url{http://gmplib.org}), with givaro-3.7.2 C++
  wrappers (\url{http://givaro.forge.imag.fr})}
\subsection{Over word-size integers}
Using word-size integers, the many masking and shifting required by recurring
relations do penalize the performance, where the simpler Algorithm~\ref{alg:expl} is on average $26\%$ faster on a
standard desktop PC, as shown on Figure~\ref{fig:uint}. Differently, Arazi and
Qi's variant suffers from the manipulations required to extract the low and high
parts of integers.
\begin{figure}[htb]\center\vspace{-2pt}
\includegraphics[width=\columnwidth]{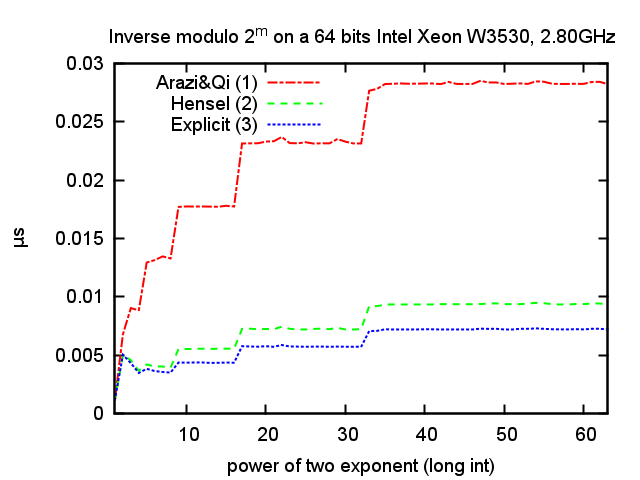}
\caption{Modular inverse on 64 bits machine words}\label{fig:uint}\vspace{-2pt}
\end{figure}

\subsection{Over arbitrary precision arithmetic}
From Lemma~\ref{lem:explarb}, we see that the explicit formula adds a logarithmic factor, asymptotically. 
In practice, Figure~\ref{fig:gmp} shows that using
GMP\footnotemark[1], the asymptotic behavior of Algorithm~\ref{alg:arazi} becomes predominant only for integers with more than $1200$ bits. 
For the Newton-Raphson iteration the asymptotic behavior of Algorithm~\ref{alg:hensel} becomes predominant even sooner, for integers with about $640$
 bits. 
Below that size, Algorithm~\ref{alg:expl} is better.
\begin{figure}[htb]\center\vspace{-2pt}
\includegraphics[width=\columnwidth]{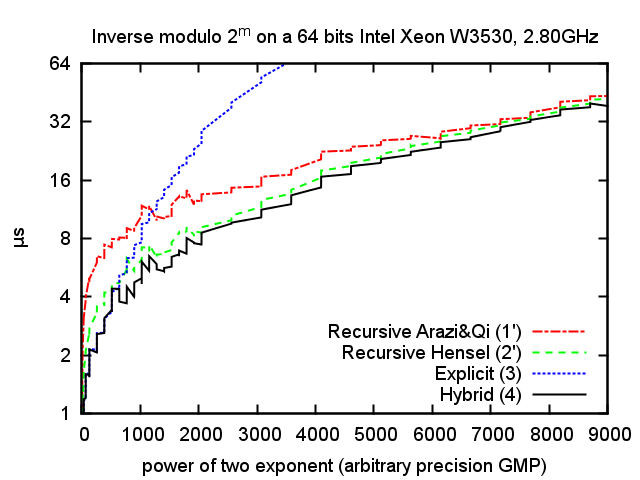}
\caption{Modular inverse on arbitrary precision integers}\label{fig:gmp}\vspace{-2pt}
\end{figure}

\begin{remark} 
  Now, in \cite{Krishnamurthy:1983:padic,Xenophontos:2010:fixed},
  the recurrence relation from~(\ref{eq:rec}), is extended\footnote{this is to
    be compared with explicit Formula~(\ref{eq:expl}), $V_{n+1} = V_n
    (1+(1-ab)^{2^n})$, where the computation is done with the first inverse $mod
    p$ (recall that $b\equiv a^{-1} \mod p$), where in the classical setting the
    computation is done with the inverse so far: $X_n$}
  to $X_{n+1} = \frac{1-(1-a X_n)^r}{a}$ for a fixed $r$. 
  This allows a faster convergence, by a factor of $\log_2(r)$. 

  Unfortunately the price to pay
  is to compute a $r$-th power at each iteration (instead of a single square),
  which could be done, say by recursive binary squaring, but at a price of
  $\log_2(r)$ squarings and between $0$ and $\log_2(r)$ multiplications. Overall
  there would be no improvement in the running time.
\end{remark}

\subsection{Hybrid algorithm}

With the thresholds of Figure~\ref{fig:gmp} and from the previous algorithms, 
we can then use a classical hybrid strategy which is better than
all of them everywhere:
for small exponents it uses the explicit formula of Algorithm~\ref{alg:expl};
then for larger exponents:
\begin{enumerate}
\item it starts to compute the inverse recursively at half the initial exponent;
\item then, to lift the inverse modulo the double exponent, it uses the classical Hensel formula of Equation~(\ref{eq:rec}), and
  switches to Arazi\&Qi formula of Equation~(\ref{eq:arazi}), only for exponents
  larger than $9000$ bits. 
\end{enumerate}
Actually, the lift switches back to Hensel formula after $10^6$ bits: indeed on
the used computer quasi linear multiplication via FFT comes into play in GMP and
the analysis of Section~\ref{sec:ana} is not relevant anymore.
The obtained algorithm is on average $21\%$ times faster than any
other direct lifting alone as shown with the curve (4) of
Figure~\ref{fig:gmp} (recall that on Figure~\ref{fig:gmp} ordinates are
presented in a logarithmic scale) and also on the ratios of
Figure~\ref{fig:ratio}.
\begin{figure}[htb]\center\vspace{-2pt}
\includegraphics[width=\columnwidth]{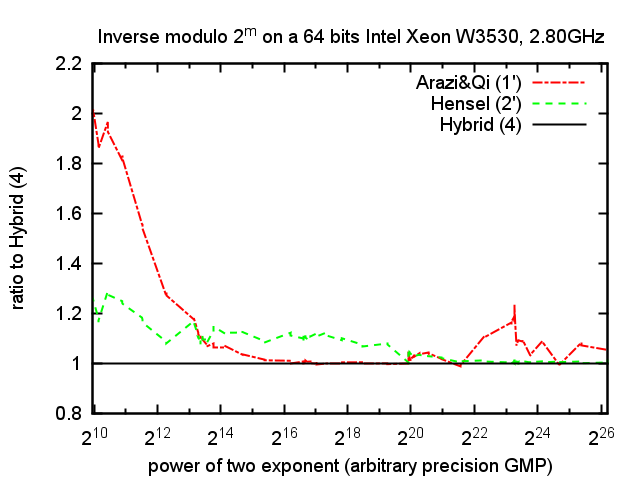}
\caption{Ratios of modular inverse lifting algorithms over the hybrid method}\label{fig:ratio}\vspace{-2pt}
\end{figure}

\section{Conclusion}
We have studied different variants of Newton-Raphson's iteration over
p-adic numbers to compute the inverse modulo a prime power.
We have shown that a new explicit formula can be up $26\%$ times faster in
practice than the recursive variants for small exponents.
Asymptotically, though, the latter formula suffers from a supplementary
logarithmicfactor in the power (or a doubly logarithmic factor in the prime
power) that makes it slower for large arbitrary precision integers. 
However, using each one of the best two algorithms in their respective regions
of efficiency, we were able to use a hybrid strategy with improved performance
of $21\%$ on average at any precision.  

More studies are to be made for the respective behavior of the algorithms in odd
characteristic. Indeed there bit masking is replaced by extended euclidean
algorithms variants and their respective performance could be different.

\section*{Acknowledgment} 
Many thanks to Christoph Walther, who found two typographic mistakes 
in the published version of Algorithm~\ref{alg:rec} ($2^h$ where $p^h$ was
correct, line~\ref{lin:modph} and $p^h$ where $p^m$ was correct,
line~\ref{lin:modpm})~\cite{Walther:2018:cav}.


\bibliographystyle{plain}
\bibliography{invmodpk}
\end{document}